\documentclass[11pt]{article}

\usepackage[a4paper,margin=29mm]{geometry}
\usepackage{amsmath,amssymb,amsthm,mathtools}
\usepackage{microtype}
\usepackage{booktabs}
\usepackage{authblk}
\usepackage[colorlinks=true,citecolor=blue,linkcolor=blue]{hyperref}

\newtheorem{theorem}{Theorem}[section]
\newtheorem{proposition}[theorem]{Proposition}
\newtheorem{lemma}[theorem]{Lemma}
\newtheorem{corollary}[theorem]{Corollary}

\theoremstyle{definition}

\theoremstyle{remark}
\newtheorem{remark}[theorem]{Remark}

\newcommand{\F}{\mathbb F}
\newcommand{\A}{\mathbb A}
\newcommand{\PP}{\mathbb P}
\newcommand{\BCH}{\operatorname{BCH}}
\newcommand{\Supp}{\operatorname{Supp}}
\newcommand{\Span}{\operatorname{span}}
\newcommand{\Frob}{\operatorname{Frob}}

\title{Asymptotic Bounds on Generalized Covering Radii of Binary Primitive BCH Codes}
\author[1]{Maosheng Xiong\thanks{\texttt{mamsxiong@ust.hk}}}
\author[1]{Chi Hoi Yip\thanks{\texttt{machyip@ust.hk}}}
\author[2]{Ferdinando Zullo\thanks{\texttt{ferdinando.zullo@unicampania.it}}}

\affil[1]{Department of Mathematics, The Hong Kong University of Science and Technology, Hong Kong, P. R. China}

\affil[2]{Dipartimento di Matematica e Fisica, Universit\`a degli Studi della Campania ``Luigi Vanvitelli'', Caserta, Italy}
\date{}

\begin{document}

\maketitle

\begin{abstract}
Fix integers $e\ge2$ and $r\ge1$. In this paper we study the $r$-th generalized covering radius $\rho_r\left(\BCH(e,m)\right)$ of the binary primitive $e$-error-correcting 
BCH code $\BCH(e,m)$. By using an algebraic-geometric reformulation of the covering problem together with an explicit Lang--Weil estimate, we prove that 
\[
\rho_r\bigl(\BCH(e,m)\bigr)
\le
(r+1)e-1
\]
for all sufficiently large $m$. For $e\ge7$, this improves a recent result of
Belinsky--Zabokritskiy. Our proof gives a substantially simpler
geometric approach to this upper bound. In particular it implies that 
\[\rho_2\bigl(\BCH(e,m)\bigr)=3e-1\]
for all sufficiently large $m$. Previously it was only known that 
\[\rho_2\bigl(\BCH(e,m)\bigr) \in \left\{3e-1,3e\right\}\]
for all sufficiently large $m$. 
\end{abstract}

\medskip
\noindent\textbf{2020 Mathematics Subject Classification.}
Primary 14G15, 94B75; Secondary 94B27, 94B15.

\smallskip
\noindent\textbf{Keywords.}
Generalized covering radius, binary primitive BCH codes, algebraic varieties over
finite fields, Lang--Weil estimate.

\section{Introduction}

The covering radius is a basic geometric parameter of a code, measuring
how well its codewords fill the ambient Hamming space; see, for example,
\cite{CohenKarpovskyMattsonSchatz}. For a binary linear code
$C\subseteq\F_2^n$, its covering radius is
\[
\rho(C)
:=
\max_{v\in\F_2^n}\min_{c\in C}d(v,c),
\]
where $d$ denotes the Hamming distance.

Generalized covering radii were introduced by Elimelech, Firer, and
Schwartz \cite{ElimelechFirerSchwartz} as a higher-order extension of
the covering radius, motivated by database linear querying, including
private information retrieval. Roughly speaking, the $r$-th
generalized covering radius measures the number of coordinate positions
needed to cover $r$ target vectors simultaneously, using a common set
of positions. Elimelech, Firer, and Schwartz gave equivalent
combinatorial, geometric, and algebraic formulations and established
connections with generalized Hamming weights. Subsequent work has
considered, among other topics, Reed--Muller codes
\cite{ElimelechWeiSchwartz} and asymptotic rate questions
\cite{ElimelechSchwartz}. A recent finite-geometric reformulation in
terms of $(\rho,t)$-saturating sets was developed in \cite{AlfaranoMarinoNeriTrombetti}.

For the present paper, the most convenient formulation is in terms of
syndromes. Let $C\subseteq\F_2^n$ be a binary linear code of codimension
$k$, with a full-rank parity-check matrix
\[
H=(h_1,\ldots,h_n),
\qquad
h_i\in\F_2^k.
\]
The $r$-th generalized covering radius $\rho_r(C)$ is the smallest
integer $R$ such that, for every ordered $r$-tuple of syndromes
$s^{(1)},\ldots,s^{(r)}\in\F_2^k$, there exists a set
$I\subseteq\{1,\ldots,n\}$ with $|I|\le R$ satisfying
\[
s^{(a)}
\in
\Span_{\F_2}\{h_i:i\in I\},
\qquad
1\le a\le r.
\]
In particular, $\rho_1(C)=\rho(C)$. All spans in this paper are over
$\F_2$.

Fix $e\ge2$ and put $q=2^m$ and $F=\F_q$, with
$F^\times=F\setminus\{0\}$. We denote by $\BCH(e,m)$ the binary
primitive narrow-sense BCH code of length $q-1$ and designed distance
$2e+1$. In the usual full-rank range, its parity-check columns can be
indexed by $x\in F^\times$ and written as
\begin{equation}\label{eq:column}
h_e(x)
=
\begin{pmatrix}
x\\
x^3\\
x^5\\
\vdots\\
x^{2e-1}
\end{pmatrix}
\in F^e.
\end{equation}
After fixing an $\F_2$-basis of $F$, this is interpreted as a binary
column of height $em$. We refer to the field element $x$ indexing
$h_e(x)$ as its \emph{locator}. For convenience, we also set
$h_e(0)=0$; the value $x=0$ is used only as an auxiliary locator and
does not correspond to a coordinate of the BCH code.

Throughout the paper we assume that the odd binary cyclotomic cosets
corresponding to
\[
1,3,\ldots,2e-1
\]
are pairwise distinct and have size $m$. For fixed $e$ this holds once
$m$ is sufficiently large. A standard sufficient condition is
\[
2e-1\le 2^{\lceil m/2\rceil};
\]
see, for example, \cite[Lemmas~8 and~9]{AlyKlappeneckerSarvepalli} and
\cite[Section~II-D]{XiongYip}. Our results concern the asymptotic regime
in which $e$ and $r$ are fixed and $m$ tends to infinity.

The case $r=1$ is classical, so our main interest is in the higher
generalized covering radii. Our main upper bound is the following.

\begin{theorem}\label{thm:upper}
Fix $e\ge2$ and $r\ge1$. There exists $m_0(e,r)$ such that, for every
$m\ge m_0(e,r)$,
\[
\rho_r\bigl(\BCH(e,m)\bigr)\le re+e-1.
\]
\end{theorem}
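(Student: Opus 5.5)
We reduce the problem to finding an $\F_q$-rational point on a variety whose degree depends only on $e$ and $r$, and then obtain such a point from an explicit Lang--Weil estimate. Fix syndromes $s^{(1)},\dots,s^{(r)}\in F^e$. We may assume they are pairwise distinct, since merging equal ones only helps. We look for a common set $T=\{y_1,\dots,y_{e-1}\}\subseteq F$ and sets $S_a=\{x_{a,1},\dots,x_{a,e}\}\subseteq F$, $1\le a\le r$, such that
\[
\sum_{i=1}^{e}h_e(x_{a,i})+\sum_{k=1}^{e-1}h_e(y_k)=s^{(a)},\qquad 1\le a\le r .
\]
Given such a point, set $I=T\cup\bigcup_a S_a$ with $0$ removed. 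Then $|I|\le re+e-1$ and every $s^{(a)}$ lies in $\Span\{h_e(x):x\in I\}$.

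Coincidences among coordinates cause no harm. Since $h_e(0)=0$, a zero locator contributes nothing. Since we work over $\F_2$, a locator repeated in one sum cancels, and a locator shared by different sums costs only one index. So \emph{any} $\F_q$-point of the affine variety
\[
V=V_{s}\subseteq\A^{re+e-1},\qquad p_j(S_a)+p_j(T)=s^{(a)}_j\quad(j=1,3,\dots,2e-1,\ 1\le a\le r)
\]
suffices, where $p_j$ denotes the $j$-th power sum. No removal of diagonals is needed. The variety $V$ is cut out by $re$ equations in $re+e-1$ variables, with total degree bounded by $((2e-1)!!)^r$, independent of $m$ and of $s$. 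Every component therefore has dimension at least $e-1$.

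The plan is to show that $V$ has an absolutely irreducible component of dimension $e-1$ defined over $\F_q$, with degree bounded in terms of $e,r$ only. The explicit Lang--Weil bound then gives
\[
\#V(\F_q)\ge q^{e-1}-O_{e,r}(q^{e-3/2})>0
\]
for $m\ge m_0(e,r)$, uniformly in the syndromes.

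To get the component, we project to the $T$-coordinates $\A^{e-1}$. In characteristic $2$ the odd power sums $p_1,p_3,\dots,p_{2e-1}$ of $e$ elements determine the elementary symmetric functions $\sigma_1,\dots,\sigma_e$ through Newton's identities. Indeed, $p_{2j}=p_j^2$, and the recursion solves for $\sigma_{2j-1}$ at step $p_{2j-1}$, while the even $\sigma$'s follow from squares. Hence the fibre over $T$ for index $a$ is the set of roots of a monic polynomial $f_a(T;X)$ of degree $e$. Its coefficients are explicit polynomials in $s^{(a)}$ and in the power sums of $T$. Thus $V$ is birational to the fibre product over $\A^{e-1}$ of the $r$ splitting covers of $f_1,\dots,f_r$. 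It has a component of dimension $e-1$ defined over $\F_q$ and absolutely irreducible exactly when the geometric monodromy group of the compositum $\prod_a f_a$ over $\overline{\F}_2(T)$ is the full group $S_e^r$, or at least contains a transitive subgroup stable under Frobenius. The main obstacle is this uniform irreducibility statement.

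I would attack it as follows. Restrict $T$ to a suitable line or curve in $\A^{e-1}$, for example by varying a single $y_k$. Along it the top coefficient (or the constant term) of each $f_a$ moves nonconstantly while the others move in a controlled way. This lets one check that each $f_a$ is irreducible with full Galois group $S_e$, using simple branch points (transpositions) together with irreducibility. For $a\ne b$, the difference $s^{(a)}\neq s^{(b)}$ shifts the branch loci, so the $r$ covers are linearly disjoint, and Goursat's lemma gives $S_e^r$. Care is needed for small characteristic-$2$ ramification (wild ramification, $e$ even). In such cases I would instead exhibit a smooth $\F_q$-point of $V$ at which the Jacobian of the $re$ equations has full rank. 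This forces the component through that point to be geometrically irreducible of dimension $e-1$ and defined over $\F_q$.
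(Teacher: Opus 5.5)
There is a genuine gap. The step you yourself call the main obstacle, uniform irreducibility via monodromy, is never established, and your sketch of it does not work.

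First, ``each $f_a$ is irreducible with Galois group $S_e$'' is false for some targets. If $s^{(a)}=h_e(c)$ with $c\in F$, then $(y_1,\dots,y_{e-1},c)$ solves the $a$-th block identically, so $f_a$ splits completely over $F(y_1,\dots,y_{e-1})$.

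More seriously, ``$s^{(a)}\ne s^{(b)}$ shifts the branch loci, so the covers are linearly disjoint'' is an assertion, not an argument, and pairwise information cannot suffice anyway. Goursat's lemma only handles two factors. Even granting that every $f_a$ has group $S_e$ and that the covers are pairwise disjoint, for $r=3$ the group of the compositum could be the index-$2$ subgroup $\{(g_1,g_2,g_3):\operatorname{sgn}g_1\operatorname{sgn}g_2\operatorname{sgn}g_3=1\}$. This subgroup surjects onto every pair of factors. If it occurs, $V$ splits into two geometric components that Frobenius may swap. To get $S_e^r$ you need the $r$ quadratic sign subfields, which in characteristic $2$ are Artin--Schreier extensions, to have $\F_2$-linearly independent classes. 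This is precisely where the completion-cover method of Belinsky--Zabokritskiy needs explicit sign-class computations. They completed these only for $e\le6$, the problematic targets being $(0,\dots,0,\gamma)$, whose classes are governed by $\gamma^{2e-3}$.

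A smaller error: the even $\sigma_k$ do not ``follow from squares''. Newton's identity at an even index contains no $\sigma_k$ term, so the $\sigma_k$ come from a linear system with a nontrivial determinant; for $e=2$, $\sigma_2=(p_3+p_1^3)/p_1$. The coefficients of $f_a$ are therefore rational, not polynomial.

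Your fallback points in the right direction but, as stated, is circular. An $\F_q$-point of the \emph{affine} variety $V$ already solves the covering problem for the given syndromes, so you cannot presuppose one. The missing idea is to look at infinity, where the syndromes disappear.

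Homogenize with an extra coordinate $t$ to get the projective closure $\overline V\subseteq\PP^{re+e-1}$. Choose distinct $a_1,\dots,a_{e-1}\in F^\times$ and put $a_e=0$. Take the point $t=0$, $y_k=a_k$, $x_{a,i}=a_i$. Every equation vanishes there for every $\boldsymbol{s}$, because the terms cancel in pairs in characteristic $2$.

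The Jacobian in the variables $x_{a,i}$ is block diagonal, with Vandermonde blocks in $a_1^2,\dots,a_e^2$, so this $F$-rational point is smooth. It lies on a unique component, of dimension $e-1$. Frobenius fixes the point and permutes components, so it fixes this component, which is therefore defined over $F$. Adding the $t$-column to the same minor shows that $\overline V\cap\{t=0\}$ has dimension $e-2$ at this point, so the component is not contained at infinity.

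B\'ezout bounds the degree of the component by $((2e-1)!!)^r$. Lang--Weil, in the explicit Cafure--Matera form, then gives an affine $F$-point uniformly in $\boldsymbol{s}$, with no monodromy computation at all.
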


The geometry behind Theorem~\ref{thm:upper} is simple at the level of
dimensions. Given $r$ target syndromes, we look for $e-1$ locators
common to all of them and $e$ additional locators for each target.
This gives $re+e-1$ variables subject to $re$ power-sum equations, so
the expected dimension is $e-1$. The key observation is that the
corresponding projective variety has a smooth $F$-rational point at
infinity that is independent of the syndromes. The Jacobian criterion
singles out a unique irreducible component of the expected dimension; the
$q$-th power Frobenius fixes this component, showing that it is defined over
$F$, and a second Jacobian computation shows that it meets the affine chart.
B\'ezout's inequality and an explicit Lang--Weil estimate then produce
the required $F$-rational affine point.

The upper bound is complemented by two lower bounds, one coming from
the generalized supercode lemma and the Griesmer bound, and the other
from an elementary counting argument.

\begin{theorem}\label{thm:twosided}
Fix $e\ge2$ and $r\ge1$. For all sufficiently large $m$,
\[
\max\left\{
re,\,
\sum_{i=0}^{r-1}
\left\lceil\frac{2e-1}{2^i}\right\rceil
\right\}
\le
\rho_r\bigl(\BCH(e,m)\bigr)
\le
re+e-1.
\]
In particular,
\[
\rho_1\bigl(\BCH(e,m)\bigr)=2e-1
\qquad\text{and}\qquad
\rho_2\bigl(\BCH(e,m)\bigr)=3e-1
\]
for all sufficiently large $m$.
\end{theorem}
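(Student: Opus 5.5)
The upper bound is exactly Theorem~\ref{thm:upper}, so only the two lower bounds and the two specializations need proof. For $r=1$ the left side is $\max\{e,2e-1\}=2e-1$. For $r=2$ it is $\max\{2e,(2e-1)+e\}=3e-1$, using $\lceil(2e-1)/2\rceil=e$. In both cases this matches $re+e-1$, which gives the two equalities once the lower bounds hold. I will therefore prove (a) $\rho_r\ge re$ and (b) $\rho_r\ge\sum_{i=0}^{r-1}\lceil(2e-1)/2^i\rceil$, both for $m$ large.

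For (a) I will count. Under the standing cyclotomic-coset assumption the code has codimension $k=em$, so there are $2^{rem}=q^{re}$ ordered $r$-tuples of syndromes. Suppose every such tuple is covered by some $I$ with $|I|\le R$. Enlarging $I$ to size exactly $R$ is harmless, so we may assume $|I|=R$. Each $r$-tuple then lies in $(\Span\{h_i:i\in I\})^r$, which has at most $2^{rR}$ elements, and there are $\binom{q-1}{R}\le q^R/R!$ choices of $I$. Hence
\[
q^{re}\le 2^{rR}\frac{q^R}{R!}.
\]
If $R\le re-1$, the right side is at most $C(e,r)\,q^{re-1}$ for a constant $C(e,r)$ independent of $m$, which is impossible once $q$ is large. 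So $\rho_r\ge re$.

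For (b) I will use the generalized supercode lemma of Elimelech--Firer--Schwartz, $\rho_r(C)\ge$ the $r$-th generalized Hamming weight $d_r(C^{\perp\text{-super}})$. Concretely, for every supercode $C'\supseteq C$ of dimension $\dim C+r$ we have $\rho_r(C)\ge d_r(C')$, and more generally $\rho_r(C)\ge d_r$ of any code containing $C$ with $r$ extra dimensions, in the form stated there. I will take $C'=\BCH(e-1,m)$, the BCH code with designed distance $2e-1$. It contains $\BCH(e,m)$, with codimension gap $m\ge r$ for large $m$, so a suitable intermediate code $C''$ with $\BCH(e,m)\subseteq C''\subseteq C'$ of dimension $\dim\BCH(e,m)+r$ exists. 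Every nonzero word of $C''$ has weight at least $2e-1$.

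The Griesmer bound for generalized weights then applies: any $r$-dimensional subcode $D$ with minimum distance $d\ge 2e-1$ has support size at least $\sum_{i=0}^{r-1}\lceil d/2^i\rceil$. This follows by applying the classical Griesmer bound to $D$ punctured to its support, a binary $[|\Supp D|,r,\ge d]$ code. Hence $d_r(C'')\ge\sum_{i=0}^{r-1}\lceil(2e-1)/2^i\rceil$, which gives (b).

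I expect the main obstacle to be stating the supercode lemma in exactly the form needed. I must check that the generalized version really bounds $\rho_r$ of a code $C$ below by $d_r$ of a code $C''\supseteq C$ of dimension $\dim C+r$. The mechanism is that covering the $r$ syndromes of a basis of $C''/C$ by a common set $I$ yields an $r$-dimensional subcode of $C''$ supported on $I$. If the published statement differs, I will reprove it in this syndrome language directly. Choose representatives $c_1,\dots,c_r$ of a basis of $C''/C$ and take their syndromes as the targets. Write each syndrome as $\sum_{i\in I}\lambda_{a,i}h_i$, and form the words $c_a+\sum_{i\in I}\lambda_{a,i}e_i$; these lie in $C$. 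The vectors $v_a=\sum_{i\in I}\lambda_{a,i}e_i$ lie in $C''$, are supported on $I$, and are independent modulo $C$. Therefore $|I|\ge d_r(C'')$.
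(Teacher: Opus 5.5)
Your proof is correct and follows essentially the same route as the paper: the upper bound is taken from Theorem~\ref{thm:upper}, the union-bound count gives $\rho_r\ge re$ as in Proposition~\ref{prop:counting}, and the bound $\rho_r\ge\sum_{i=0}^{r-1}\lceil(2e-1)/2^i\rceil$ comes from the supercode lemma together with the BCH bound for $\BCH(e-1,m)$ and the Griesmer bound on punctured $r$-dimensional subcodes, as in Proposition~\ref{prop:griesmer}. The differences are cosmetic. You reprove the supercode lemma directly in syndrome language, and that argument is valid; it also works with $\BCH(e-1,m)$ itself, so the intermediate code $C''$ is unnecessary. The paper instead cites the lemma from Xiong--Yip (not Elimelech--Firer--Schwartz, as you attribute it).
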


We next place these results in the context of previous work. The
ordinary covering radius of primitive binary BCH codes has a long
history. In particular, for every fixed $e$, Cohen \cite{Cohen} proved
that $\rho_1(\BCH(e,m))=2e-1$ for all sufficiently large $m$; Kavut and
Tutdere \cite{KavutTutdere} later improved the corresponding effective
range.

For higher generalized covering radii, several special cases have
recently been studied. For the double-error-correcting family ($e=2$),
Yohananov and Schwartz \cite{YohananovSchwartz} determined the second
generalized covering radius and also established the general lower bound
$\rho_2(\BCH(e,m))\ge3e-1$ in the relevant full-rank range.
\"{O}zbudak and \"{O}zt\"{u}rk \cite{OzbudakOzturkThird} proved that
$\rho_3(\BCH(2,m))=7$ for even $m\ge8$, while
$\rho_3(\BCH(2,m))\in\{6,7\}$ for odd $m\ge9$. Xiong and Yip
\cite{XiongYip}, combining combinatorial arguments with Weil-type
character-sum estimates, subsequently showed that, for every fixed $r$,
\[
2r\le \rho_r(\BCH(2,m))\le2r+1
\]
once $m$ is sufficiently large. These bounds agree with the
specialization of our general bounds to $e=2$, while the specialized
results give sharper information for particular extension degrees.

For the triple-error-correcting family ($e=3$), \"{O}zbudak and
\"{O}zt\"{u}rk \cite{OzbudakOzturkSecond} considered a related weak
second generalized covering radius and proved the upper bound $9$ for
odd $m\ge11$ and even $m\ge20$. Very recently, Essayag and
Zabokritskiy \cite{EssayagZabokritskiy} proved
\[
\rho_2\bigl(\BCH(3,m)\bigr)=8
\qquad (m\ge5).
\]
Their proof combines algebraic point counting with exact verification
of the remaining finite cases. Thus Theorem~\ref{thm:twosided}
recovers the same value for $e=3$ when
$m$ is sufficiently large and establishes $\rho_2(\BCH(e,m))=3e-1$ for
every fixed $e$ once $m$ is sufficiently large.

Most closely related to the present work is the very recent
completion-cover method of Belinsky and Zabokritskiy
\cite{BelinskyZabokritskiy}. For $r\ge2$, they obtain the same lower
bound as in Theorem~\ref{thm:twosided} once $m$ is sufficiently large,
and their
upper-bound construction starts from the same common-core parametrization
as ours, with $e-1$ locators shared by all targets and $e$ further
locators for each target. Their treatment of the resulting geometry uses
a technically sophisticated completion-cover framework. After first
removing targets that are already single BCH columns, each remaining
target gives a finite cover over the function field of the common
locators, with generic Galois group $S_e$, and the simultaneous problem
is reduced to linear disjointness of the corresponding completion
fields. In characteristic two, this is
controlled through quadratic sign subfields described by Artin--Schreier
classes, together with a matroid-intersection argument for the remaining
exceptional directions.

The delicate case consists of the pure highest-coordinate targets
$(0,\ldots,0,\gamma)$. The leading polar coefficient of the relevant
sign class is governed by $\gamma^{2e-3}$, and the map
$\gamma\mapsto\gamma^{2e-3}$ need not preserve $\F_2$-linear
independence. For $3\le e\le6$, Belinsky and Zabokritskiy compute
further terms of the corresponding Berlekamp classes and show that these
remove the possible cancellation; together with the straightforward case
$e=2$, this gives the bare common-core bound
$\rho_r(\BCH(e,m))\le(r+1)e-1$. As they emphasize,
``The cutoff $e=6$ marks only the range for which the exact sign-class
calculation is completed here'' \cite[Introduction]{BelinskyZabokritskiy};
in particular, they do not assert a failure of this bound for $e\ge7$.
For general $e$, if some pure basis directions fail the required
independence condition, they introduce a small common set of BCH columns
and shift the exceptional targets by suitable binary combinations of
these columns. After applying the common-core construction to the shifted
targets, the same columns are restored. This gives
\[
\rho_r\bigl(\BCH(e,m)\bigr)
\le (r+1)e-1+c_{e,m,r},
\]
where $c_{e,m,r}$ bounds the number of columns used in this
regularization step. It is explicit, vanishes when
$\gcd(2e-3,2^m-1)=1$, and is $O_e(1)$ uniformly in $r$.

Our argument starts from the same common-core system of power-sum equations
but bypasses this monodromy and linear-disjointness machinery. Rather than
constructing a component dominating the common-locator space, we exhibit
a single smooth $F$-rational point at infinity, independent of the
prescribed syndromes. The Jacobian criterion selects the unique
irreducible component through this point, the $q$-th power Frobenius shows
that it is defined over $F$, and a second Jacobian computation shows that
it meets the affine chart. B\'ezout's inequality and an explicit
Lang--Weil estimate then give
\[
\rho_r\bigl(\BCH(e,m)\bigr)\le(r+1)e-1
\]
for every fixed $e$ and $r$ once $m$ is sufficiently large. Thus, in this
asymptotic regime, the same common-core count follows from a substantially
simpler geometric argument. Their method, on the other hand, gives
sharper information in several finite-$m$ regimes and a better
multihomogeneous degree estimate, leading to better field-size
thresholds as $r$ grows. It would be interesting to obtain a comparable
degree estimate for the component selected by the point-at-infinity
argument.

For $r\ge3$ we do not claim that the upper bound is optimal; the exact
value for all sufficiently large $m$ remains open in general.

\medskip

The paper is organized as follows. Section~\ref{sec:lower} proves the
lower bounds. Section~\ref{sec:geometry} gives the geometric construction
and proves Theorem~\ref{thm:upper}. Section~\ref{sec:effective} records
explicit thresholds and consequences.

\section{Lower bounds}\label{sec:lower}

We first record two lower bounds.

Let $C_e=\BCH(e,m)$.
Then $C_e\subseteq C_{e-1}$.
In the full-rank range,
\[
\dim_{\F_2}C_{e-1}-\dim_{\F_2}C_e=m.
\]

\begin{proposition}\label{prop:griesmer}
Fix $r\ge1$ and assume $m\ge r$. Then
\[
\rho_r(C_e)
\ge
\sum_{i=0}^{r-1}
\left\lceil\frac{2e-1}{2^i}\right\rceil.
\]
\end{proposition}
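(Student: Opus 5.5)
We will deduce the bound from the generalized supercode lemma of Elimelech, Firer and Schwartz combined with the Griesmer bound. The generalized supercode lemma says that if $C\subsetneq C'$ are linear codes with $\dim C'-\dim C\ge r$, then $\rho_r(C)\ge d_r(C')$, the $r$-th generalized Hamming weight of $C'$. The reason is as follows. Choose $r$ linearly independent vectors $v_1,\ldots,v_r\in C'$ whose span meets $C$ only in $0$, and take the targets to be their syndromes with respect to $C$. Any common set $I$ covering these syndromes yields vectors $w_a$ supported on $I$ with $w_a\equiv v_a \pmod C$. The span of the $w_a$ lies in $C'$, is supported on $I$, and is $r$-dimensional, since its image in $C'/C$ is $r$-dimensional. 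Hence $|I|\ge d_r(C')$.

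We apply this with $C=C_e$ and $C'=C_{e-1}$, whose codimension is $m\ge r$ by the full-rank assumption. This gives $\rho_r(C_e)\ge d_r(C_{e-1})$.

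It remains to bound $d_r(C_{e-1})$. The code $C_{e-1}$ has minimum distance $d_1\ge 2e-1$ (its designed distance), so it suffices to show $d_r(C)\ge\sum_{i=0}^{r-1}\lceil d_1/2^i\rceil$ for any binary linear code $C$ with minimum distance $d_1$. This is the Griesmer-type bound for generalized weights. It follows by applying the classical Griesmer bound to an $r$-dimensional subcode $D$ with $|\Supp D|=d_r$. Puncture $D$ to its support; this gives a binary $[d_r,r,\ge d_1]$ code. Since $D$ is supported on all of its coordinates, the punctured code still has dimension $r$ and minimum distance at least $d_1$. The Griesmer bound then gives
\[
d_r\ge\sum_{i=0}^{r-1}\left\lceil \frac{d_1}{2^i}\right\rceil\ge\sum_{i=0}^{r-1}\left\lceil \frac{2e-1}{2^i}\right\rceil .
\]

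The main point needing care is the supercode lemma, specifically ensuring the chosen $v_a$ are independent modulo $C_e$ so that the covering span really is $r$-dimensional; this is exactly where $m\ge r$ is used. Everything else is standard.
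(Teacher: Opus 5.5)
Your proposal is correct and follows the paper's proof: you apply the generalized supercode lemma to $C_e\subseteq C_{e-1}$ (codimension $m\ge r$) to get $\rho_r(C_e)\ge d_r(C_{e-1})$, then apply the binary Griesmer bound to an $r$-dimensional subcode punctured to its support, using $d(C_{e-1})\ge 2e-1$. The only difference is that you prove the supercode lemma inline (correctly, via independence of the $v_a$ modulo $C_e$), whereas the paper cites it as Lemma~III.1 of Xiong--Yip.
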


\begin{proof}
Since
\[
\dim_{\F_2}C_{e-1}-\dim_{\F_2}C_e=m\ge r
\]
and $\dim_{\F_2}C_{e-1}\ge m\ge r$, the generalized supercode lemma
\cite[Lemma~III.1]{XiongYip} applies and gives
\[
\rho_r(C_e)\ge d_r(C_{e-1}),
\]
where $d_r$ denotes the $r$-th generalized Hamming weight.

The BCH bound gives $d(C_{e-1})\ge2e-1$.
Let $D\subseteq C_{e-1}$ be an arbitrary $r$-dimensional binary subcode.
Puncturing outside $\Supp(D)$ produces a binary linear code of length
$|\Supp(D)|$, dimension $r$, and minimum distance at least $2e-1$.
The binary Griesmer bound~\cite{Griesmer} therefore yields
\[
|\Supp(D)|
\ge
\sum_{i=0}^{r-1}
\left\lceil\frac{2e-1}{2^i}\right\rceil.
\]
Taking the minimum over all such $D$ proves the assertion.
\end{proof}

For $r=1$ this gives $2e-1$, while for $r=2$ it gives
\[
(2e-1)+
\left\lceil\frac{2e-1}{2}\right\rceil
=
3e-1.
\]

The following elementary counting argument gives a second lower bound with
leading term $re$.

\begin{proposition}\label{prop:counting}
Put $L=re-1$.
If $q\ge L$ and $q>\frac{2^{rL}}{L!}$,
then $\rho_r(C_e)\ge re$.
\end{proposition}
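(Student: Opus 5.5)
We argue by counting: if $\rho_r(C_e)\le L=re-1$, then every ordered $r$-tuple of syndromes in $(F^e)^r$ is covered by the span of at most $L$ columns, and we show there are too few spans to do this. The syndrome space is $\F_2^{em}$, since $h_e(x)\in F^e$ and $F\cong\F_2^m$. Hence the number of ordered $r$-tuples is $2^{emr}=q^{er}$.

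Fix a set $I\subseteq F^\times$ of locators with $|I|\le L$. The span of $\{h_e(x):x\in I\}$ has at most $2^{|I|}\le 2^L$ elements. So the number of $r$-tuples lying entirely in this span is at most $2^{rL}$. Every such tuple is also covered by any superset of $I$ of size exactly $L$, and such supersets exist because $q-1\ge L$ (which follows from $q\ge L$ up to the trivial edge case, since $q$ is a power of $2$ and $L\ge1$). Therefore it suffices to count sets of size exactly $L$. There are at most $\binom{q-1}{L}\le \frac{(q-1)^L}{L!}<\frac{q^L}{L!}$ of them. The covered $r$-tuples thus number fewer than
\[
\frac{q^L}{L!}\,2^{rL}=q^{re-1}\cdot\frac{2^{rL}}{L!}.
\]

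If $q>\frac{2^{rL}}{L!}$, this is less than $q^{re}=q^{er}$. So some $r$-tuple is not covered by any set of size at most $L$, and $\rho_r(C_e)\ge L+1=re$.

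The argument has no real obstacle. The only points needing care are reducing to sets of size exactly $L$, which uses $q-1\ge L$ and is where the hypothesis $q\ge L$ enters, and using the strict inequality $\binom{q-1}{L}<q^L/L!$ so that the hypothesis $q>2^{rL}/L!$ suffices. The vector $0$ and duplicate targets need no special treatment, since we bound all tuples uniformly.
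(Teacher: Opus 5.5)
Your counting argument is the paper's: a union bound over $L$-element locator sets, each covering at most $2^{rL}$ tuples, compared with the $q^{er}$ tuples in total. There is one unjustified step, the padding to sets of size exactly $L$ inside $F^\times$ when $q=L$.

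You assert that $q-1\ge L$ follows from $q\ge L$ ``up to the trivial edge case, since $q$ is a power of $2$''. But $L=re-1$ can itself be a power of $2$; for instance $r=e=3$ gives $L=8$. If $q=L$, then $|F^\times|=L-1$, so no $L$-subset of $F^\times$ exists to pad into. This case does not work in your favour: there $n=q-1<re$, and since the columns span the syndrome space, $\rho_r(C_e)\le n<re$, which is the opposite of the conclusion.

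The proposition holds in that case only vacuously. Full rank of the $em\times(q-1)$ parity-check matrix gives $em\le q-1=re-2$, hence $r\ge m+1$. Then $2^{rL}\ge 2^{(m+1)2^m}>2^m(2^m)!=L\cdot L!$, so $q=L<2^{rL}/L!$, contradicting the hypothesis. Your proof as written needs either this argument or a different way around the case.

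The paper sidesteps the issue with the auxiliary locator $0$. Since $h_e(0)=0$ contributes nothing to any span, every set of at most $L$ locators can be enlarged to an $L$-element subset of $F$ itself, and such subsets exist exactly when $q\ge L$. There are $\binom qL\le q^L/L!$ of them, and the union bound $q^{er}\le\binom qL 2^{rL}$ gives $q\le 2^{rL}/L!$, a contradiction. With this one change your proof is complete.

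Incidentally, you do not need the strict inequality $\binom{q-1}{L}<q^L/L!$ that you stress. The hypothesis $q>2^{rL}/L!$ is already strict, so the non-strict bound $\binom qL\le q^L/L!$ suffices.
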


\begin{proof}
Suppose, to the contrary, that $\rho_r(C_e)\le L$.
By the definition of the generalized covering radius, this means that for every ordered $r$-tuple of syndromes
\[
(s^{(1)},\ldots,s^{(r)})\in(F^e)^r
\]
there exists a set of at most $L$ parity-check columns whose binary span
contains all the syndromes $s^{(1)},\ldots,s^{(r)}$.

We count how many ordered $r$-tuples can be covered in this way.

Fix a set $S$ of at most $L$ locators. Since $q\ge L$, we may enlarge $S$,
if necessary, to an $L$-element subset of $F$. This can only enlarge its
binary span. We also allow the auxiliary locator $0$; this introduces no new
syndrome and is used only to simplify the counting.
Thus it is enough to consider $L$-element subsets of $F$, of which there are
$\binom qL$.

Now fix one such set $S=\{x_1,\ldots,x_L\}$ and put
\[
W_S
=
\Span_{\F_2}\{h_e(x_1),\ldots,h_e(x_L)\}.
\]
Then $\dim_{\F_2}W_S\le L$, and hence $|W_S|\le2^L$.
Therefore the number of ordered $r$-tuples of syndromes all lying in
$W_S$ is at most $|W_S|^r\le 2^{rL}$.

If $\rho_r(C_e)\le L$, every ordered $r$-tuple of syndromes must belong to
$W_S^r$ for at least one $L$-element set $S$. Since the syndrome space is
$F^e$, it contains $q^e$ syndromes, and consequently there are $q^{er}$
ordered $r$-tuples altogether. By the union bound, $q^{er}
\le
\binom qL\,2^{rL}$.
Using $\binom qL\le q^L/L!$ and $er=L+1$, we obtain
$q^{L+1}
\le
\frac{q^L}{L!}2^{rL}$,
and therefore $q\le\frac{2^{rL}}{L!}$,
contradicting the hypothesis.
\end{proof}

\begin{corollary}\label{cor:lower}
For fixed $e$ and $r$, and all sufficiently large $m$,
\[
\rho_r(C_e)
\ge
\max\left\{
re,\,
\sum_{i=0}^{r-1}
\left\lceil\frac{2e-1}{2^i}\right\rceil
\right\}.
\]
\end{corollary}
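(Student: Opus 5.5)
The plan is to combine Proposition~\ref{prop:griesmer} and Proposition~\ref{prop:counting}. For fixed $e$ and $r$ we only need to check that the hypotheses of both propositions hold once $m$ is large enough. The required lower bound is then the maximum of the two bounds. Throughout, we stay in the full-rank range assumed in the introduction, which also holds for all large $m$.

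\textbf{Step 1.} Proposition~\ref{prop:griesmer} requires only $m\ge r$ (together with the standing full-rank assumption). So for every $m\ge r$ in the full-rank range,
\[
\rho_r(C_e)\ge\sum_{i=0}^{r-1}\left\lceil\frac{2e-1}{2^i}\right\rceil .
\]

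\textbf{Step 2.} Put $L=re-1$. This is a constant depending only on $e$ and $r$, and so is $2^{rL}/L!$. Since $q=2^m\to\infty$, there is an $m_1(e,r)$ such that for $m\ge m_1$ we have both $q\ge L$ and $q>2^{rL}/L!$. For instance, any $m$ with $2^m>\max\{L,\,2^{rL}/L!\}$ works, and $m>rL$ suffices for that. Proposition~\ref{prop:counting} then gives $\rho_r(C_e)\ge re$.

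\textbf{Step 3.} For $m$ at least the maximum of $r$, $m_1(e,r)$, and the threshold guaranteeing the cyclotomic-coset assumption (for example $2e-1\le 2^{\lceil m/2\rceil}$), both inequalities hold simultaneously. Taking the maximum of the two lower bounds gives the corollary.

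There is no real obstacle here. The only point needing care is that both propositions are applied in the same full-rank regime, so that $C_{e-1}$ has codimension exactly $m$ in the relevant sense. That regime also holds for all sufficiently large $m$.
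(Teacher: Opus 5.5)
Your proposal is correct and matches the paper's argument: the paper states this corollary without separate proof, as the conjunction of Propositions~\ref{prop:griesmer} and~\ref{prop:counting} once $m$ is large enough that $m\ge r$, $q\ge L$, $q>2^{rL}/L!$, and the full-rank (cyclotomic-coset) condition all hold. Your explicit choice $m>rL$ does secure both hypotheses of the counting proposition.
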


\section{The geometric construction}\label{sec:geometry}

For an ordered $r$-tuple of syndromes
\[
s^{(1)},\ldots,s^{(r)}\in F^e,
\qquad
s^{(i)}
=
\bigl(
s^{(i)}_1,
s^{(i)}_3,
\ldots,
s^{(i)}_{2e-1}
\bigr),
\qquad
1\le i\le r,
\]
put
\[
\boldsymbol{s}:=(s^{(1)},\ldots,s^{(r)}).
\]
We shall look for $e-1$ locators common to all the syndromes and
$e$ additional locators for each individual syndrome. Thus there are
\[
(e-1)+re=re+e-1
\]
locators and $re$ simultaneous power-sum equations, so the expected
dimension is $e-1$.

We first record the following point-count estimate, which is a consequence
of the explicit Lang--Weil bound of Cafure--Matera~\cite{CafureMatera};
see also the classical theorem of Lang and Weil~\cite{LangWeil}.

\begin{lemma}[Cafure--Matera {\cite[Corollary~7.4]{CafureMatera}}]
\label{lem:CM}
Let $V\subseteq\A^n$ be an absolutely irreducible affine variety
over $\F_q$, of dimension $d>0$ and degree $\delta$. If
\[
q>\max\{2(d+1)\delta^2,\,2\delta^4\},
\]
then $V(\F_q)\ne\varnothing$.
\end{lemma}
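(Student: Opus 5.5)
The plan is to reduce to the case of curves via an effective Bertini theorem, and then apply a Weil-type bound for possibly singular affine curves. This is essentially the route of Cafure--Matera, and the two terms in the hypothesis $q>\max\{2(d+1)\delta^2,2\delta^4\}$ correspond to these two steps. If $d=1$ we skip the first step.

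\textbf{Step 1: cut $V$ down to a curve.} For $d\ge2$, we look for an $\F_q$-rational linear subspace $L\subseteq\A^n$ of codimension $d-1$ such that $V\cap L$ is an absolutely irreducible curve of degree at most $\delta$ defined over $\F_q$. The degree bound is automatic from B\'ezout's inequality. For absolute irreducibility we use the effective Bertini statement of Cafure--Matera. The linear forms $(\ell_1,\ldots,\ell_{d-1})$ for which the section is reducible, or has the wrong dimension, lie in a proper hypersurface of the parameter space. That hypersurface has degree bounded by about $2(d+1)\delta^2$. By the Schwartz--Zippel bound, once $q>2(d+1)\delta^2$ some $\F_q$-rational choice avoids it. Any $\F_q$-point of the resulting curve is an $\F_q$-point of $V$.

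\textbf{Step 2: count points on the curve.} Let $C$ be the absolutely irreducible affine curve of degree $\delta'\le\delta$ obtained in Step 1. Let $\overline C$ be its projective closure. By the Aubry--Perret form of the Weil bound for singular curves,
\[
\#\overline C(\F_q)\ge q+1-(\delta'-1)(\delta'-2)\sqrt q .
\]
At most $\delta'$ of these points lie on the hyperplane at infinity. Therefore
\[
\#C(\F_q)\ge q+1-(\delta-1)(\delta-2)\sqrt q-\delta .
\]
If $q>2\delta^4$, then $\sqrt q>\sqrt2\,\delta^2$. Hence $(\delta-1)(\delta-2)\sqrt q<\delta^2\sqrt q<q/\sqrt2$, and also $\delta<q/4$. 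So the right-hand side is positive, which gives $V(\F_q)\ne\varnothing$.

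\textbf{Main obstacle.} The delicate part is Step 1: obtaining the explicit degree bound for the ``bad'' locus of linear sections. This requires an effective version of Bertini's irreducibility theorem, for instance via the Noether irreducibility forms of the generic section. In this paper we simply import that bound from \cite[Corollary~7.4]{CafureMatera} rather than re-derive it.
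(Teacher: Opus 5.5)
The paper gives no argument for this lemma. It is quoted from Cafure--Matera and used as a black box in the proof of Theorem~\ref{thm:upper}. Your sketch reconstructs the route behind that citation: an effective Bertini theorem cuts $V$ down to an absolutely irreducible $\F_q$-curve, and then a Weil bound for singular curves is applied. It is sound in outline.

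Step~2 checks out. For an absolutely irreducible projective curve of degree $\delta'$, Aubry--Perret give $\#\overline C(\F_q)\ge q+1-(\delta'-1)(\delta'-2)\sqrt q$. At most $\delta'\le\delta$ of these points lie on the hyperplane at infinity, because $\overline C$ is not contained in it. The condition $q>2\delta^4$ then makes the resulting lower bound positive. Your inequality $\delta<q/4$ needs $\delta\ge2$, but for $\delta\le2$ the $\sqrt q$ term vanishes and the bound is at least $q-1$, so nothing is lost.

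All of the difficulty sits in Step~1, and two points there need tightening. First, Schwartz--Zippel produces an $\F_q$-rational good section only if the bad locus lies in a hypersurface of degree strictly less than $q$. So you need the effective Bertini degree bound to be at most $2(d+1)\delta^2$, not merely ``about'' that.

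Second, you attribute this bound to \cite[Corollary~7.4]{CafureMatera}, but that corollary is the very statement you are proving, so as written the citation is circular. The degree bound on the locus of non-irreducible linear sections is a separate ingredient of Cafure--Matera's paper (their effective version of Bertini's theorem) and should be cited as such.

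With that fixed, your argument and the paper's rest on the same source. Yours makes the elementary curve-counting step explicit but still imports the genuinely hard part.
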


The next proposition contains the geometric part of the argument. It identifies
an absolutely irreducible component of the simultaneous power-sum variety
with the properties needed below.

\begin{proposition}\label{prop:component}
Assume $q\ge e$. Let
\[
\mathcal X_{\boldsymbol{s}}\subseteq\PP^{re+e-1}
\]
have homogeneous coordinates
\[
T,\quad Z_1,\ldots,Z_{e-1},\quad
X_{i,1},\ldots,X_{i,e}
\qquad (1\le i\le r),
\]
and be defined by
\begin{equation}\label{eq:projective-incidence}
G_{i,j}
:=
\sum_{u=1}^{e-1}Z_u^{2j-1}
+\sum_{\ell=1}^{e}X_{i,\ell}^{2j-1}
-s^{(i)}_{2j-1}T^{2j-1}
=0
\end{equation}
for $1\le i\le r$ and $1\le j\le e$.
Let $H_\infty:=\{T=0\}$. Then
$\mathcal X_{\boldsymbol{s}}$ has an absolutely irreducible component
$Y_{\boldsymbol{s}}$, defined over $F$, such that
\[
\dim Y_{\boldsymbol{s}}=e-1,
\qquad
Y_{\boldsymbol{s}}\not\subseteq H_\infty,
\qquad
\deg Y_{\boldsymbol{s}}
\le
\left(\prod_{j=1}^e(2j-1)\right)^r.
\]
\end{proposition}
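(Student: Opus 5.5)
The plan is to exhibit a single $F$-rational point $P\in\mathcal X_{\boldsymbol{s}}\cap H_\infty$ at which $\mathcal X_{\boldsymbol{s}}$ is smooth of the expected dimension $e-1$. We then take $Y_{\boldsymbol{s}}$ to be the unique irreducible component through $P$.

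\textbf{Step 1: the point at infinity.} On $H_\infty$ the equations become $\sum_u Z_u^{2j-1}+\sum_\ell X_{i,\ell}^{2j-1}=0$, and these do not involve $\boldsymbol{s}$. Since $q\ge e$, we can choose distinct nonzero $a_1,\ldots,a_{e-1}\in F$. Put $T=0$, $Z_u=a_u$, and for every $i$ put $X_{i,u}=a_u$ for $u\le e-1$ and $X_{i,e}=0$. In characteristic $2$ the terms cancel in pairs, so every $G_{i,j}$ vanishes at $P$. Moreover $P$ is a nonzero $F$-rational point independent of $\boldsymbol{s}$.

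\textbf{Step 2: smoothness and dimension.} Since $2j-1$ is odd, $\partial G_{i,j}/\partial X_{i',\ell}=\delta_{ii'}X_{i,\ell}^{2j-2}$. Hence the $re\times re$ minor of the Jacobian on the $X$-columns is block diagonal. Its $i$-th block is the Vandermonde matrix $\bigl((X_{i,\ell}^2)^{j-1}\bigr)_{j,\ell}$ evaluated at the distinct values $a_1^2,\ldots,a_{e-1}^2,0$, so it is invertible. Therefore the Jacobian of the $re$ equations has full rank $re$ at $P$ in $\PP^{re+e-1}$. By the Jacobian criterion the tangent space of $\mathcal X_{\boldsymbol{s}}$ at $P$ has dimension $e-1$. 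On the other hand, every component of an intersection of $re$ hypersurfaces has dimension $\ge e-1$. Hence $P$ is a regular point, it lies on a unique component $Y_{\boldsymbol{s}}$ (over $\overline F$), and $\dim Y_{\boldsymbol{s}}=e-1$. Being a component over $\overline F$, $Y_{\boldsymbol{s}}$ is absolutely irreducible.

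\textbf{Step 3: field of definition.} The $G_{i,j}$ have coefficients in $F$, so the $q$-th power Frobenius permutes the $\overline F$-components of $\mathcal X_{\boldsymbol{s}}$. It fixes $P$ because $P$ is $F$-rational. So it maps $Y_{\boldsymbol{s}}$ to a component through $P$, which by uniqueness is $Y_{\boldsymbol{s}}$ itself. Since $\operatorname{Gal}(\overline F/F)$ is topologically generated by Frobenius, $Y_{\boldsymbol{s}}$ is Galois-stable and hence defined over $F$.

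\textbf{Step 4: not contained in $H_\infty$.} This is the second Jacobian computation. Add the linear form $T$ to the system. Its gradient is supported only on the $T$-column, whereas the $X$-block above does not involve that column. Hence the rank at $P$ becomes $re+1$, and $\mathcal X_{\boldsymbol{s}}\cap H_\infty$ has tangent space of dimension $e-2$ at $P$. If $Y_{\boldsymbol{s}}\subseteq H_\infty$, then near $P$ we would get $\dim(\mathcal X_{\boldsymbol{s}}\cap H_\infty)\ge e-1$, a contradiction.

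\textbf{Step 5: degree.} Apply the refined B\'ezout inequality (Fulton, Heintz). The sum of the degrees of the irreducible components of the intersection of the hypersurfaces $G_{i,j}=0$ is at most $\prod_{i,j}\deg G_{i,j}=\bigl(\prod_{j=1}^e(2j-1)\bigr)^r$, which bounds $\deg Y_{\boldsymbol{s}}$.

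The main obstacle is Step 1: finding a syndrome-independent point at infinity where the Jacobian has full rank. The choice of pairing the $Z$'s with the first $e-1$ coordinates of every block, and putting one zero locator per block, is what makes both the vanishing and the Vandermonde nondegeneracy work at once. The remaining steps are standard consequences of it.
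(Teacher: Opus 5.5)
Your proposal is correct and follows the paper's proof essentially step for step: the same syndrome-independent point at infinity (with $a_e=0$ in every block), the same block-diagonal Vandermonde minor, Frobenius invariance of the unique component through $P$, the augmented Jacobian with the $T$-row to rule out $Y_{\boldsymbol{s}}\subseteq H_\infty$, and B\'ezout's inequality for the degree. The differences are cosmetic: the paper normalizes $a_1=1$ to work explicitly in the chart $Z_1\ne0$, and you additionally spell out the lower bound $\dim\ge e-1$ for components of an intersection of $re$ hypersurfaces, which the paper leaves implicit in its Jacobian-criterion step.
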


\begin{proof}
In the geometric arguments below, $\overline F$ denotes the algebraic closure of $F$, and all varieties and irreducible components are considered over $\overline F$ unless otherwise stated.

Since $q\ge e$, we may choose pairwise distinct elements $a_1=1,a_2,\ldots,a_{e-1}\in F^\times$ and put $a_e=0$. Consider the point $P\in\PP^{re+e-1}(F)$ given by
\[
T=0,\qquad
Z_u=a_u\quad(1\le u\le e-1),\qquad
X_{i,\ell}=a_\ell\quad(1\le i\le r,\ 1\le\ell\le e).
\]
For every $i$ and $j$,
\[
G_{i,j}(P)
=
2\sum_{u=1}^{e-1}a_u^{2j-1}
=
0
\]
since the characteristic is two. Hence $P\in \mathcal X_{\boldsymbol{s}}(F)\cap H_\infty$.
Since $Z_1(P)=1$, we work in the affine chart $Z_1\ne0$.

Fix $i$. At $P$, the $e\times e$ block of the Jacobian formed by
$G_{i,1},\ldots,G_{i,e}$ and the variables
$X_{i,1},\ldots,X_{i,e}$ is
\[
V
=
\bigl(a_\ell^{\,2j-2}\bigr)_{1\le j,\ell\le e}
=
\bigl((a_\ell^2)^{j-1}\bigr)_{1\le j,\ell\le e},
\]
since $\frac{d}{dX}X^{2j-1}=X^{2j-2}$ in characteristic two.
Thus $V$ is a Vandermonde matrix in
$a_1^2,\ldots,a_e^2$. These elements are pairwise distinct because
$a_1,\ldots,a_e$ are pairwise distinct and the squaring map $x\mapsto x^2$ is injective on $F$. Hence $V$ is
nonsingular.

For different values of $i$, these blocks involve disjoint sets of
variables. Hence the full Jacobian of the $re$ equations
\eqref{eq:projective-incidence} contains the block-diagonal minor $\operatorname{diag}(V,\ldots,V)$, with $r$ copies of $V$. Consequently its rank at $P$ is $re$. 

The affine chart $Z_1\ne0$ has dimension $re+e-1$, so the Jacobian criterion shows that $P$ is a smooth point of $\mathcal X_{\boldsymbol{s}}$ of codimension $re$. Hence $P$ lies on a unique irreducible component $Y$, and
\[
\dim Y=(re+e-1)-re=e-1.
\]

Since $P$ is $F$-rational, the $q$-th power Frobenius map on $\PP^{re+e-1}(\overline F)$,
\[
\Frob_q: [x_0:\cdots:x_{re+e-1}] \longmapsto [x_0^q:\cdots:x_{re+e-1}^q],
\]
fixes $P$ and permutes the irreducible components of $\mathcal X_{\boldsymbol{s}}$. Hence $\Frob_q(Y)$ is again an irreducible component containing $P$. By uniqueness, $\Frob_q(Y)=Y$. Since $Y$ is invariant under the $q$-th power Frobenius, it is defined over $F$. Denote this component by $Y_{\boldsymbol{s}}$. It is absolutely irreducible, and $\dim Y_{\boldsymbol{s}}=e-1$.

It remains to show that $Y_{\boldsymbol{s}}\not\subseteq H_\infty$. Consider the variety $\mathcal X_{\boldsymbol{s}}\cap H_\infty$, obtained by adding the equation $T=0$ to \eqref{eq:projective-incidence}. At $P$, take the same $re$ columns corresponding to the variables $X_{i,\ell}$, together with the $T$-column. The resulting $(re+1)\times(re+1)$ Jacobian minor has the form
\[
\begin{pmatrix}
\operatorname{diag}(V,\ldots,V) & *\\
0 & 1
\end{pmatrix},
\]
and is therefore nonsingular. Hence $P$ is a smooth point of
$\mathcal X_{\boldsymbol{s}}\cap H_\infty$, of dimension
\[
(re+e-1)-(re+1)=e-2.
\]
If $Y_{\boldsymbol{s}}\subseteq H_\infty$, then
$Y_{\boldsymbol{s}}$ would be an irreducible subvariety of
$\mathcal X_{\boldsymbol{s}}\cap H_\infty$ through $P$ of dimension
$e-1$, a contradiction. Thus
$Y_{\boldsymbol{s}}\not\subseteq H_\infty$.

Finally, $Y_{\boldsymbol{s}}$ is an irreducible component of the common
zero locus of the $re$ hypersurfaces \eqref{eq:projective-incidence}.
For each fixed $i$, the $e$ hypersurfaces indexed by $j$ have degrees
$1,3,\ldots,2e-1$. Hence B\'ezout's inequality gives
\[
\deg Y_{\boldsymbol{s}}
\le
\prod_{i=1}^r\prod_{j=1}^e(2j-1)
=
\left(\prod_{j=1}^e(2j-1)\right)^r. \qedhere
\]
\end{proof}

We now pass to the affine incidence variety. Let
$X_{\boldsymbol{s}}\subseteq\A^{re+e-1}$ be defined by
\begin{equation}\label{eq:affine-incidence}
\sum_{u=1}^{e-1}z_u^{2j-1}
+
\sum_{\ell=1}^{e}x_{i,\ell}^{2j-1}
=
s^{(i)}_{2j-1},
\qquad
1\le i\le r,\quad 1\le j\le e.
\end{equation}
This is precisely the affine chart $T\ne0$ of
$\mathcal X_{\boldsymbol{s}}$, after normalizing $T=1$.

Put
\begin{equation}\label{eq:Delta}
\Delta_{e,r}
:=
\left(\prod_{j=1}^e(2j-1)\right)^r
=
\bigl((2e-1)!!\bigr)^r.
\end{equation}

\begin{proof}[Proof of Theorem~\ref{thm:upper}]
Take $m$ sufficiently large that the usual BCH full-rank conditions
hold, that $q=2^m\ge e$, and that
\begin{equation}\label{eq:LW-condition-main}
q>2\Delta_{e,r}^{\,4}.
\end{equation}
Fix an arbitrary ordered $r$-tuple $\boldsymbol{s}$ of syndromes as
above.

By Proposition~\ref{prop:component},
$Y_{\boldsymbol{s}}\not\subseteq H_\infty$, so its affine part
\[
V_{\boldsymbol{s}}
:=
Y_{\boldsymbol{s}}\cap\{T\ne0\}
\]
is nonempty and absolutely irreducible, defined over $F$, and has
dimension $e-1$. Its projective closure is $Y_{\boldsymbol{s}}$, so
\[
\deg V_{\boldsymbol{s}}
=
\deg Y_{\boldsymbol{s}}
\le
\Delta_{e,r}.
\]

Lemma~\ref{lem:CM} therefore applies as soon as
\[
q>
\max\left\{
2e\Delta_{e,r}^{\,2},
2\Delta_{e,r}^{\,4}
\right\}.
\]
Since $\Delta_{e,r}\ge 2e-1$, we have
$\Delta_{e,r}^{\,2}>e$, and hence
\[
2e\Delta_{e,r}^{\,2}
<
2\Delta_{e,r}^{\,4}.
\]
Thus \eqref{eq:LW-condition-main} guarantees that
$V_{\boldsymbol{s}}(F)\ne\varnothing$.

Choose an $F$-rational point
\[
z_1,\ldots,z_{e-1},
\qquad
x_{i,1},\ldots,x_{i,e}
\qquad (1\le i\le r)
\]
of $V_{\boldsymbol{s}}$. Equation \eqref{eq:affine-incidence} gives
\[
s^{(i)}
=
\sum_{u=1}^{e-1}h_e(z_u)
+
\sum_{\ell=1}^{e}h_e(x_{i,\ell}),
\qquad
1\le i\le r.
\]

Thus all the prescribed syndromes lie in the $\F_2$-span of the
columns indexed by the distinct nonzero elements among the displayed
$(e-1)+re$ locators. There are at most
\[
(e-1)+re=re+e-1
\]
such columns. Since $\boldsymbol{s}$ was arbitrary,
\[
\rho_r\bigl(\BCH(e,m)\bigr)
\le
re+e-1.\qedhere
\]
\end{proof}

\begin{remark}\label{rem:higher-r}
For $r\ge3$, the construction need not be optimal. A column in a
common support for $r$ syndromes can occur with any nonzero coefficient
vector in $\F_2^r$. Our parametrization uses $e-1$ locators with the
all-ones coefficient vector and, for each $1\le i\le r$, $e$ locators
with coefficient vector equal to the $i$-th standard basis vector of
$\F_2^r$. When $r=2$, these are exactly the three nonzero vectors in
$\F_2^2$, and the resulting upper bound is sharp for all sufficiently
large $m$.

When $r\ge3$, there are additional nonzero coefficient vectors. They
can arise when some of the chosen locators coincide, since the
corresponding coefficient vectors then add in $\F_2^r$, but they are
not present as separate variables in our parametrization. Allowing
them from the outset could lead to smaller common supports. Thus the
present argument does not determine, for a given $e$ and $r\ge3$,
whether
\[
\rho_r\bigl(\BCH(e,m)\bigr)
=
(r+1)e-1
\]
for all sufficiently large $m$.
\end{remark}

\section{Explicit thresholds and consequences}
\label{sec:effective}

The proof of Theorem~\ref{thm:upper} already yields a simple explicit
field-size threshold. Put
\[
D_e
:=
\prod_{j=1}^{e}(2j-1)
=
(2e-1)!!.
\]
Since
\[
\Delta_{e,r}=D_e^r,
\]
condition \eqref{eq:LW-condition-main} becomes
\begin{equation}\label{eq:q-upper}
q=2^m>2D_e^{4r}.
\end{equation}

We therefore obtain the following effective form of the upper bound.

\begin{proposition}\label{prop:explicit-upper}
Assume that the usual BCH full-rank condition
\begin{equation}\label{eq:BCH-rank-effective}
2e-1\le 2^{\lceil m/2\rceil}
\end{equation}
holds. If
\[
2^m>2D_e^{4r},
\]
then
\[
\rho_r\bigl(\BCH(e,m)\bigr)
\le
(r+1)e-1.
\]
\end{proposition}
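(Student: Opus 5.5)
This proposition is essentially a bookkeeping restatement of the proof of Theorem~\ref{thm:upper}, so the plan is to rerun that argument and track each hypothesis that was used. First, the condition \eqref{eq:BCH-rank-effective} is the standard sufficient condition recalled in the introduction. It guarantees that the cyclotomic cosets of $1,3,\ldots,2e-1$ are distinct and of size $m$. Hence the parity-check matrix with columns $h_e(x)$, $x\in F^\times$, has full rank $em$, and $\rho_r$ is given by the syndrome formulation with syndromes ranging over all of $F^e$.

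Second, I will verify the auxiliary hypothesis $q\ge e$ of Proposition~\ref{prop:component}. Since $D_e\ge 1\cdot 3=3$ for $e\ge2$ and $r\ge1$, we get $2^m>2D_e^{4r}\ge 2\cdot 3^{4}$, which is far larger than $e$. Indeed $D_e\ge 2e-1\ge e$, so $2D_e^{4r}>e$.

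Next, for an arbitrary tuple $\boldsymbol{s}$, Proposition~\ref{prop:component} supplies an absolutely irreducible component $Y_{\boldsymbol{s}}$ with the following properties: it is defined over $F$, has dimension $e-1\ge1$, is not contained in $H_\infty$, and has degree at most $\Delta_{e,r}=D_e^r$. Its affine part $V_{\boldsymbol{s}}$ inherits these properties, and its degree is bounded by the same quantity, since $Y_{\boldsymbol{s}}$ is the closure of $V_{\boldsymbol{s}}$. I then apply Lemma~\ref{lem:CM} with $d=e-1$ and $\delta\le D_e^r$. This requires $q>\max\{2e\delta^2,2\delta^4\}$. Because $\delta$ enters monotonically, it suffices to have $q>\max\{2eD_e^{2r},2D_e^{4r}\}$. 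The inequality $D_e^{2r}\ge D_e^2\ge(2e-1)^2>e$ shows that the second term dominates, so the hypothesis $2^m>2D_e^{4r}$ suffices.

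Finally, an $F$-point of $V_{\boldsymbol{s}}$ writes each $s^{(i)}$ as a sum of $e-1$ common columns and $e$ individual columns. Dropping zero locators and cancelling repeated locators in pairs over $\F_2$ leaves each $s^{(i)}$ in the span of at most $(r+1)e-1$ columns. The only mildly delicate point is the degree monotonicity in Lemma~\ref{lem:CM}: we only have an upper bound on $\delta$, so we must use that the threshold is increasing in $\delta$. There is no real obstacle beyond this check.
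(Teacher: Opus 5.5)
Your proposal is correct and follows the paper's route. The paper's proof just observes that \eqref{eq:BCH-rank-effective} gives the parity-check description and that $2^m>2D_e^{4r}$ is exactly condition \eqref{eq:LW-condition-main} from the proof of Theorem~\ref{thm:upper}. You re-run that proof explicitly, checking $q\ge e$, $d=e-1\ge1$, $2eD_e^{2r}<2D_e^{4r}$, and monotonicity in $\delta$; the one thing to state more sharply is that the $(r+1)e-1$ columns form a single common set (the distinct nonzero locators of the point), as the definition of $\rho_r$ requires.
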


\begin{proof}
The BCH rank condition gives the parity-check description
\eqref{eq:column}. The field-size inequality is precisely the condition
used in the proof of Theorem~\ref{thm:upper}. Hence the conclusion follows
directly from that theorem and its proof.
\end{proof}

We now translate these conditions into explicit bounds on $m$.
Define
\begin{equation}\label{eq:mBCH}
m_{\mathrm{BCH}}(e)
:=
2\left\lceil\log_2(2e-1)\right\rceil-1.
\end{equation}
Then every $m\ge m_{\mathrm{BCH}}(e)$ satisfies
\eqref{eq:BCH-rank-effective}.

Similarly, put
\begin{equation}\label{eq:mLW}
m_{\mathrm{LW}}(e,r)
:=
\min\left\{
m\ge1:
2^m>2D_e^{4r}
\right\}.
\end{equation}
Equivalently,
\[
m_{\mathrm{LW}}(e,r)
=
\left\lfloor
1+4r\log_2 D_e
\right\rfloor+1.
\]

\begin{corollary}\label{cor:explicit-upper-m}
Put
\begin{equation}\label{eq:mup}
m_{\mathrm{up}}(e,r)
:=
\max\left\{
m_{\mathrm{BCH}}(e),
m_{\mathrm{LW}}(e,r)
\right\}.
\end{equation}
Then, for every $m\ge m_{\mathrm{up}}(e,r)$,
\[
\rho_r\bigl(\BCH(e,m)\bigr)
\le
(r+1)e-1.
\]
\end{corollary}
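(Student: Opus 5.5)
The plan is to reduce the corollary to Proposition~\ref{prop:explicit-upper}. That proposition has two hypotheses: the rank condition \eqref{eq:BCH-rank-effective} and the field-size inequality $2^m>2D_e^{4r}$. It therefore suffices to show that every $m\ge m_{\mathrm{up}}(e,r)$ satisfies both. Both conditions are monotone in $m$, so it is enough that $m$ is at least each of the two thresholds $m_{\mathrm{BCH}}(e)$ and $m_{\mathrm{LW}}(e,r)$ separately.

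First, the rank condition. Put $k=\lceil\log_2(2e-1)\rceil$, so that $2e-1\le 2^k$. If $m\ge m_{\mathrm{BCH}}(e)=2k-1$, then $m/2\ge k-\tfrac12$. Since $\lceil m/2\rceil$ is an integer and is at least $m/2>k-1$, we get $\lceil m/2\rceil\ge k$. Hence
\[
2^{\lceil m/2\rceil}\ge 2^k\ge 2e-1,
\]
which is \eqref{eq:BCH-rank-effective}. The definition \eqref{eq:mBCH} already records this, and the check takes one line.

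Second, the Lang--Weil condition. By the definition \eqref{eq:mLW}, $m_{\mathrm{LW}}(e,r)$ is the least $m$ with $2^m>2D_e^{4r}$. Since $2^m$ is increasing in $m$, every $m\ge m_{\mathrm{LW}}(e,r)$ also satisfies $2^m>2D_e^{4r}$. I would not need the closed-form expression for $m_{\mathrm{LW}}$, only the minimality definition together with monotonicity. Proposition~\ref{prop:explicit-upper} then gives $\rho_r(\BCH(e,m))\le(r+1)e-1$.

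There is one point to watch. The proof of Theorem~\ref{thm:upper} also used $q\ge e$, which Proposition~\ref{prop:component} needs in order to pick distinct $a_u$. This follows from $2^m>2D_e^{4r}\ge 2(2e-1)^{4}>e$, so it is implied by the field-size condition. Apart from confirming that this implicit hypothesis is covered, I expect no real obstacle: the argument is bookkeeping of thresholds.
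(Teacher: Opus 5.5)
Your proposal is correct and follows the paper's route: the corollary comes from Proposition~\ref{prop:explicit-upper}, with $m\ge m_{\mathrm{BCH}}(e)$ giving the rank condition \eqref{eq:BCH-rank-effective} and $m\ge m_{\mathrm{LW}}(e,r)$ giving $2^m>2D_e^{4r}$ by minimality and monotonicity. Your additional check that $q\ge e$ follows from $2^m>2D_e^{4r}\ge 2(2e-1)^4>e$ covers a hypothesis from the proof of Theorem~\ref{thm:upper} that the paper's proof of Proposition~\ref{prop:explicit-upper} does not mention explicitly.
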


Thus, for fixed $e$, the extension degree supplied by the present argument
grows linearly with $r$:
\[
m_{\mathrm{up}}(e,r)
=
4r\log_2D_e+O_e(1).
\]

We can make the lower-bound range explicit as well. Remember that $L=er-1$.
Proposition~\ref{prop:counting} applies as soon as $2^m\ge L$ and $2^m>\frac{2^{rL}}{L!}$.
Accordingly, define
\begin{equation}\label{eq:mcount}
m_{\mathrm{count}}(e,r)
:=
\max\left\{
\left\lceil\log_2L\right\rceil,\,
\left\lfloor
rL-\log_2(L!)
\right\rfloor+1
\right\}.
\end{equation}
Together with the condition $m\ge r$ needed in
Proposition~\ref{prop:griesmer}, this gives the following explicit
two-sided version of Theorem~\ref{thm:twosided}.

\begin{corollary}\label{cor:explicit-two-sided}
Let
\begin{equation}\label{eq:mtwo}
m_{\mathrm{two}}(e,r)
:=
\max\left\{
m_{\mathrm{up}}(e,r),\,
r,\,
m_{\mathrm{count}}(e,r)
\right\}.
\end{equation}
Then, for every $m\ge m_{\mathrm{two}}(e,r)$,
\[
\max\left\{
re,\,
\sum_{i=0}^{r-1}
\left\lceil\frac{2e-1}{2^i}\right\rceil
\right\}
\le
\rho_r\bigl(\BCH(e,m)\bigr)
\le
(r+1)e-1.
\]
\end{corollary}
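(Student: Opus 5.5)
This corollary should follow by assembling the thresholds already established, so the plan is to check that each ingredient's hypothesis holds for every $m\ge m_{\mathrm{two}}(e,r)$ and then combine the bounds.

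\emph{Upper bound.} Since $m\ge m_{\mathrm{up}}(e,r)$, Corollary~\ref{cor:explicit-upper-m} gives $\rho_r(\BCH(e,m))\le(r+1)e-1$. Behind this I would only recheck two points. First, $m\ge m_{\mathrm{BCH}}(e)=2\lceil\log_2(2e-1)\rceil-1$ gives $\lceil m/2\rceil\ge\lceil\log_2(2e-1)\rceil$, hence \eqref{eq:BCH-rank-effective}. Second, $2^m$ is increasing in $m$, so $m\ge m_{\mathrm{LW}}(e,r)$ gives $2^m>2D_e^{4r}$, and Proposition~\ref{prop:explicit-upper} applies.

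\emph{Lower bounds.} Since $m\ge r$, Proposition~\ref{prop:griesmer} gives the Griesmer-type sum directly. For the bound $re$, I would use Proposition~\ref{prop:counting}, whose hypotheses are $q\ge L$ and $q>2^{rL}/L!$. These hold for the following reasons:
\begin{itemize}
\item $m\ge\lceil\log_2 L\rceil$ gives $2^m\ge L$;
\item $m\ge\lfloor rL-\log_2(L!)\rfloor+1>rL-\log_2(L!)$ gives $2^m>2^{rL}/L!$.
\end{itemize}
The second implication needs a strict inequality, and it is strict because $\lfloor x\rfloor+1>x$ for every real $x$. Taking the maximum of the two lower bounds gives the left-hand side.

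The only step needing any care is the implicit standing assumption. Section~\ref{sec:lower} uses the full-rank fact $\dim C_{e-1}-\dim C_e=m$ and the parity-check description \eqref{eq:column}, both of which rely on the distinct-cyclotomic-coset hypothesis. I would note that this hypothesis is again guaranteed by \eqref{eq:BCH-rank-effective}, since $m\ge m_{\mathrm{BCH}}(e)$. Beyond that, the corollary is pure bookkeeping.
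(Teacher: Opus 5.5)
Your proposal is correct and takes essentially the paper's route, which states the corollary without a separate proof: the upper bound from Corollary~\ref{cor:explicit-upper-m}, the Griesmer-type bound from Proposition~\ref{prop:griesmer} via $m\ge r$, and the bound $re$ from Proposition~\ref{prop:counting} via $m\ge m_{\mathrm{count}}(e,r)$. You also make explicit a point the paper leaves implicit: the full-rank hypothesis behind Section~\ref{sec:lower} is guaranteed because $m_{\mathrm{up}}(e,r)\ge m_{\mathrm{BCH}}(e)$, and you handle it correctly.
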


For the first two generalized covering radii, the counting lower bound is
not needed: the Griesmer lower bound already matches the upper bound.
Consequently the effective upper threshold itself gives exact values.

\begin{corollary}\label{cor:effective-r1-r2}
For every fixed $e\ge2$,
\[
\rho_1\bigl(\BCH(e,m)\bigr)=2e-1
\qquad
\text{for all }m\ge m_{\mathrm{up}}(e,1),
\]
and
\[
\rho_2\bigl(\BCH(e,m)\bigr)=3e-1
\qquad
\text{for all }m\ge m_{\mathrm{up}}(e,2).
\]
\end{corollary}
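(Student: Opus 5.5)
The statement is Corollary~\ref{cor:effective-r1-r2}. It follows by combining the effective upper bound with the Griesmer-type lower bound, checking that the latter needs only $m\ge r$ and the full-rank condition, both of which hold once $m\ge m_{\mathrm{up}}(e,r)$. No counting lower bound is needed, so the threshold $m_{\mathrm{count}}$ never enters.

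\emph{Upper bound.} Fix $r\in\{1,2\}$ and $m\ge m_{\mathrm{up}}(e,r)$. Corollary~\ref{cor:explicit-upper-m} gives $\rho_r(\BCH(e,m))\le(r+1)e-1$ directly, which is $2e-1$ for $r=1$ and $3e-1$ for $r=2$. This step also makes the full-rank parity-check description \eqref{eq:column} available, since $m\ge m_{\mathrm{BCH}}(e)$ and hence \eqref{eq:BCH-rank-effective} holds.

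\emph{Lower bound.} I would apply Proposition~\ref{prop:griesmer}, which needs $m\ge r$ together with the full-rank codimension drop $\dim C_{e-1}-\dim C_e=m$.
\begin{itemize}
\item For the codimension drop, the cyclotomic coset assumption for $1,3,\dots,2e-1$ follows from \eqref{eq:BCH-rank-effective}, and it also covers $C_{e-1}$, whose cosets form a subset of these.
\item For $m\ge r$, note that $m_{\mathrm{LW}}(e,r)\ge 1+4r\log_2 D_e\ge 1+4r\log_2 3>r$, since $D_e\ge3$ for $e\ge2$. So $m\ge m_{\mathrm{up}}(e,r)$ already gives $m\ge r$.
\end{itemize}
The proposition then yields $\rho_1\ge 2e-1$. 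For $r=2$ it yields $\rho_2\ge(2e-1)+\lceil(2e-1)/2\rceil=(2e-1)+e=3e-1$, as computed after Proposition~\ref{prop:griesmer}.

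\emph{Main subtlety.} The key point is that the hypothesis $m\ge r$ of Proposition~\ref{prop:griesmer} is automatically implied by $m\ge m_{\mathrm{up}}(e,r)$, as shown above. With that verified, the two bounds coincide for $r=1,2$, which proves both equalities.
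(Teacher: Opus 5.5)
Your proposal is correct and follows essentially the same route as the paper. The paper also gets the corollary by pairing the effective upper bound of Corollary~\ref{cor:explicit-upper-m} with the Griesmer-type lower bound of Proposition~\ref{prop:griesmer}, noting that the counting bound is unnecessary for $r\le 2$; your explicit check that $m\ge m_{\mathrm{up}}(e,r)$ already forces both $m\ge r$ and the full-rank condition fills in a step the paper leaves implicit.
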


The thresholds furnished by the general argument for the second radius are
shown below. They are sufficient thresholds only and are not intended to
be minimal.

\begin{center}
\begin{tabular}{c|r|c|c}
\toprule
$e$ & $D_e=(2e-1)!!$
& $m_{\mathrm{up}}(e,2)$
& $\rho_2(\BCH(e,m))$\\
\midrule
$2$ & $3$       & $14$  & $5$\\
$3$ & $15$      & $33$  & $8$\\
$4$ & $105$     & $55$  & $11$\\
$5$ & $945$     & $81$  & $14$\\
$6$ & $10395$   & $108$ & $17$\\
$7$ & $135135$  & $138$ & $20$\\
$8$ & $2027025$ & $169$ & $23$\\
\bottomrule
\end{tabular}
\end{center}

For $r=3$, the same calculation gives, for example,
\[
m_{\mathrm{up}}(3,3)=48,\qquad
m_{\mathrm{up}}(4,3)=82,\qquad
m_{\mathrm{up}}(5,3)=120,\qquad
m_{\mathrm{up}}(6,3)=162.
\]
Combining these upper thresholds with the Griesmer bound yields
\[
10\le
\rho_3\bigl(\BCH(3,m)\bigr)
\le11
\qquad (m\ge48),
\]
\[
13\le
\rho_3\bigl(\BCH(4,m)\bigr)
\le15
\qquad (m\ge82),
\]
\[
17\le
\rho_3\bigl(\BCH(5,m)\bigr)
\le19
\qquad (m\ge120),
\]
and
\[
20\le
\rho_3\bigl(\BCH(6,m)\bigr)
\le23
\qquad (m\ge162).
\]

\section*{Acknowledgments}
The last author is very grateful for the hospitality of the Department of
Mathematics of The Hong Kong University of Science and Technology, where
he spent two weeks as a visiting researcher.
The last author was partially supported by the Italian National Group for
Algebraic and Geometric Structures and their Applications
(GNSAGA--INdAM).

\end{document}